\documentclass[conference]{IEEEtran}
\usepackage{amsmath,amssymb,amsthm}
\usepackage{algorithm,algorithmic}
\usepackage{setspace,flushend}
\algsetup{linenosize=\small}
\usepackage[left=0.65in,right=0.65in,
top=0.75in,bottom=1.02in]{geometry}
\setstretch{1.012}

\newcommand{\cA}{\mathcal{A}}
\newcommand{\cB}{\mathcal{B}}
\newcommand{\cE}{\mathcal{E}}
\newcommand{\cF}{\mathcal{F}}
\newcommand{\cG}{\mathcal{G}}
\newcommand{\cH}{\mathcal{H}}
\newcommand{\cK}{\mathcal{K}}
\newcommand{\cI}{\mathcal{I}}
\newcommand{\cL}{\mathcal{N}}
\newcommand{\cM}{\mathcal{M}}
\newcommand{\cP}{\mathcal{P}}
\newcommand{\cT}{\mathcal{T}}
\newcommand{\cU}{\mathcal{U}}
\newcommand{\cV}{\mathcal{V}}
\newcommand{\cW}{\mathcal{W}}
\newcommand{\cX}{\mathcal{X}}
\newcommand{\cY}{\mathcal{Y}}

\newcommand{\bF}{\mathbb{F}}

\newcommand{\vxi}{\underline{\xi}}
\newcommand{\vdel}{\underline{\delta}}

\newcommand{\vv}{\mathbf{v}}
\newcommand{\vy}{\mathbf{y}}
\newcommand{\vz}{\mathbf{z}}
\newcommand{\mA}{\mathbf{A}}

\newcommand{\mM}{\mathbf{M}}
\newcommand{\mH}{\mathbf{H}}
\newcommand{\mR}{\mathbf{R}}
\newcommand{\mT}{\mathbf{T}}
\newcommand{\mV}{\mathbf{V}}
\newtheorem{thm}{Theorem}[section]

\begin{document} \sloppy

\title{Linear Network Coding for Multiple Groupcast Sessions: An Interference Alignment Approach}
\author{Abhik~Kumar~Das, Siddhartha~Banerjee and Sriram~Vishwanath\\
Dept. of ECE, The University of Texas at Austin, TX, USA\\
Email: \{akdas,siddhartha\}@utexas.edu, sriram@austin.utexas.edu}

\maketitle

\begin{abstract}
We consider the problem of linear network coding over communication networks, representable by directed acyclic graphs, with multiple groupcast sessions: the network comprises of multiple destination nodes, each desiring messages from multiple sources. We adopt an interference alignment perspective, providing new insights into designing practical network coding schemes as well as the impact of network topology on the complexity of the alignment scheme. In particular, we show that under certain (polynomial-time checkable) constraints on networks with $K$ sources, it is possible to achieve a rate of $1/(L+d+1)$ per source using linear network coding coupled with interference alignment, where each destination receives messages from $L$ sources $(L< K)$, and $d$ is a parameter, solely dependent on the network topology, that satisfies $0\leq d< K-L$.

%This paper considers the problem of linear network coding over communication networks, representable by directed acyclic graphs, with multiple groupcast sessions. In other words, the network comprises of multiple destinations, where each desires messages from multiple sources. We analyze the problem from an interference alignment perspective, providing new insights into designing practical coding schemes as well as the impact of network topology on the complexity of the alignment scheme. In particular, we show that under certain polynomial-time (in terms of network parameters) checkable constraints on networks with $K$ sources, it is possible to achieve a rate of $1/(L+d+1)$ per source using linear network coding coupled with interference alignment, where each destination receives messages from $L$ sources $(L< K)$, and $d$ is a parameter solely dependent on the network topology that satisfies $0\leq d< K-L$.
\end{abstract}

\begin{keywords}
groupcast, linear network coding, alignment
\end{keywords}

\section{Introduction}
The problem of characterizing the capacity of communication networks and designing coding strategies with achievable rates close to network capacity has been an important topic of research. Ever since the development of the novel concept of \emph{linear network coding} (LNC) and its success in demonstrating the achievability of maximum throughput in multicast networks, extensions of the concept have been applied to more general settings for obtaining useful answers to network capacity problems \cite{Ahlswede_Yeung,Li2003,kotter,Jaggi2005}. However, both scalar and vector versions of LNC have been shown to be inadequate in typifying the limits of inter-session network coding \cite{lehman2,medard-nm,Dougherty}; and this has hindered the progress towards the development of coding schemes that provide improved rates or even guarantees on the achievable rates with respect to network capacity.

In this paper, we consider the problem of LNC for networks, represented by directed acyclic graphs, with multiple groupcast sessions. As defined in \cite{MalekiJafar2012,Syed2012}, a groupcast session refers to the setup where a destination is interested in messages from multiple (not necessarily all) sources, or analogously, messages from a source are transmitted to multiple (not necessarily all) destinations. Thus, unicast (one source to one destination) and broadcast (one source to all destinations) can be thought of as special cases of groupcast. The use of LNC results in a linear transfer function representation for the network in terms of its transmission streams; these streams can ``mix" with each other and generate ``interference" at the destinations  \cite{kotter,KK}. A sufficient but somewhat restrictive condition for interference-free transmission in networks employing LNC is derived in \cite{kotter}, but it is highly non-trivial to design LNC schemes that satisfy this condition in case of multiple transmission sessions.

We analyze the problem of LNC over networks with multiple groupcast sessions from an interference alignment perspective, along the lines of \cite{das10,Ramakrishnan2010,Han2011,meng12}, that look at the problem of LNC for networks with three unicast sessions. The motivation for this approach comes from the fact that a network with groupcast sessions and using LNC is analogous to a generalized version of the information-theoretic interference channel where each destination desires messages from multiple sources. A similar approach has been adopted in the context of analyzing multiple groupcasts for index coding \cite{MalekiJafar2012,Syed2012}. We focus on designing coding schemes based on LNC coupled with interference alignment, through the use of precoding matrices and multiple transmissions \cite{Cadambe-Jafar}. We also examine the effect of groupcast configurations and network topology on achievable source rates and the ease of using alignment methods for decoding relevant messages at destinations.

\textbf{Related Work:} The problem of designing inter-session LNC schemes achieving specific source rates for general network topologies has been shown to be NP-hard \cite{lehman2}; this has prompted the development of sub-optimal constructive LNC schemes for networks with multiple transmission sessions; examples include packing a network using poison-antidote butterflies \cite{poison1}, linear programming \cite{poison2}, and network tiling for networks based on triangular lattices \cite{tiling}. The problem of determining the feasibility of LNC and constructing coding schemes for two multicast sessions is analyzed in \cite{WangShroff07,SongCai11} using a graph-theoretic approach. The use of interference alignment methods alongside LNC has been examined in \cite{das10,Ramakrishnan2010,meng12} for three unicast sessions, where the main result is that each source can achieve a rate close to half the mincut using large enough number of transmissions, if the mincuts are ones and certain network constraints are satisfied.

\textbf{Main Results:} In this paper, we introduce the concept of an {\em interference graph} associated with a network having multiple groupcast sessions. We utilize this interference graph to design precoding matrices, and examine its impact on achievable source rates. In particular, for networks with $K$ sources and mincuts of either zero or one for any source-destination pair, we show that, if the interference graph has no cycles, then each source can achieve a rate of $\frac{1}{L+1}$ using LNC coupled with interference alignment over $(L+1)$ transmissions, given that each destination is interested in messages from $L$ ($L<K$) sources and a set of polynomial-time checkable network constraints are satisfied. We obtain a weaker achievability result if the interference graph has cycles -- we show that a rate of $\frac{1}{L+d+1}$ per source can be achieved with interference alignment over $(L+d+1)$ transmissions under certain polynomial-time checkable network constraints, where $d$ depends only on the topology of interference graph and satisfies $0\leq d< K-L$. Furthermore, we develop an algorithm that gives the optimal (or smallest feasible) value of $d$ for a given interference graph, and hence the maximal rates for the coding scheme.

The rest of the paper is organized as follows. We describe the system model and preliminaries in Section \ref{sec:sysmodel}. We design coding schemes and state the achievability results in Section \ref{sec:results}. Finally, we conclude the paper with Section \ref{sec:conclusion}.

\section{System Model and Preliminaries}\label{sec:sysmodel}
{\em Notation:} We use $\bF_q$ to represent the finite field with $q$ elements, where $q$ is a prime number or its power. Given a vector of variables $\underline{z}$, we use $\bF_q[\underline{z}]$ to denote the polynomial ring over $\bF_q$ constructed using variables in $\underline{z}$. For any matrix $\mA$ and vector space $\cU$ over some field, we use $\textrm{span}(\mA)$ to refer to the vector space spanned by the column vectors of $\mA$, and $\textrm{dim}(\cU)$ to represent the dimension number of $\cU$.

\textbf{System Model:} We consider a communication network represented by a directed acyclic graph $\cG=(\cV,\cE)$, where $\cV$ is the set of nodes and $\cE$ is the set of directed links. We assume that each link represents a noiseless channel and transmissions across different links do not interfere with each other. There are $K$ sources $S_1,S_2,\ldots,S_K$, and $M$ destinations $D_1,D_2,\ldots,D_M$, among the nodes in $\cV$. We have multiple groupcast sessions in $\cG$, i.e., $D_i$ is interested in messages from some subset of sources, say $\cA_i\subset\{S_1,S_2,\ldots,S_K\}$. For the sake of simplicity, we let $|\cA_i|=L$ for all $i$. We assume that the messages generated by different sources are probabilistically independent and transmitted in form of symbols from $\bF_q$, where $q$ is a prime or its power. We also restrict the capacity of links in $\cE$ to one symbol (from $\bF_q$) per transmission.

We employ LNC for communication between the sources and destinations in $\cG$. In other words, every node generates and transmits linear combinations of its received packets, where the coefficients for linear combination come from $\bF_q$. These coefficients can be treated as variables, say $\xi_1,\xi_2,\ldots,\xi_s$, that take values from $\bF_q$. Then a LNC scheme refers to choosing a suitable assignment of $\vxi:=[\xi_1\,\,\xi_2 \,\cdots \,\xi_s]$ from $\bF_q^s$.

The generalization of \emph{Max-flow Mincut} Theorem for networks states that the transmission rate between a source and destination is bounded above by the mincut between them \cite{kotter}. As a starting point for tackling the problem of designing codes for groupcast sessions, we assume that the mincut between $S_j$ and $D_i$ is one if $S_j\in\cA_i$, and at most one for remaining choices of $i,j$ -- this ensures that $D_i$ is connected to sources in $\cA_i$ and it can receive at most one symbol per transmission from them. We define $x_j\in\bF_q$ as the symbol transmitted by $S_j$ and $m_{ij}(\vxi)\in\bF_q[\vxi]$ as the transfer function between $S_j$ and $D_i$  for some transmission. Then, the symbol received by $D_i$, also denoted by $y_i\in\bF_q[\vxi]$, is given by the following relation:
\[
y_i = \sum_{j=1}^K m_{ij}(\vxi)x_j,\quad i = 1,2,\ldots,M.
\]
Since $D_i$ is only interested in messages from sources in $\cA_i$, the presence of non-zero transfer functions $m_{ij}(\vxi)$, $S_j\not\in \cA_i$, acts as ``interference" to the decoding processes at the destinations. Note that $m_{ij}(\vxi)\not\equiv 0$ for $S_j\in\cA_i$, since the mincut between each source in $\cA_i$ and $D_i$ is one. Also, the mincut between $S_j$ and $D_i$ being zero for some $i,j$ implies that $m_{ij}(\vxi)\equiv 0$. We define $\cB_i=\{S_j\not\in\cA_i:m_{ij}(\vxi)\not\equiv 0\}$ -- the set of interfering sources for $D_i$. We also assume $\cB_i\neq \emptyset$ for all $i$ -- this ensures the presence of ``interference" at each of the destinations.

\textbf{Interference Alignment:} The presence of interfering transfer functions in the system model described above, motivates the need for interference alignment in conjunction with LNC schemes \cite{das10,Ramakrishnan2010,Han2011,meng12}. In particular, we focus on the application of alignment schemes based on symbol-extension that ensures the sources in $\cG$ are able to transmit at equal rates. We consider $n$ time-slots or transmissions and define $\vxi^{(k)}$ as the assignment of $\vxi$ for the $k$th transmission, $k=1,2,\ldots,n$. Given $a,b,n$ such that $a\leq b$ and $n\geq La+b$, we define $\vz_i\in\bF_q^{a\times 1}$ as the message vector of $S_i$, and consider a $n\times a$ precoding matrix $\mV_i$ that encodes $\vz_i$ into $n$ symbols. Then $D_i$ receives a $n\times 1$ vector $\vy_i$ that satisfies the following relation:
\[
\vy_i=\sum_{j=1}^K\mM_{ij}\mV_j\vz_j,\quad i=1,2,\ldots,M.
\]
Note that $\mM_{ij}$ is a $n\times n$ diagonal matrix with $m_{ij}(\vxi^{(k)})$ as the $(k,k)$th entry. We define $\vdel$ as the vector of variables in $\vxi^{(1)},\vxi^{(2)},\ldots,\vxi^{(n)}$ and those used for constructing the precoding matrices, and also the following vector spaces over $\bF_q[\vdel]$:
\begin{align*}
\cU_i&=\textrm{span}([\mM_{ij}\mV_j:\,\,S_j\in\cA_i]),\\
\cW_i&=\textrm{span}([\mM_{ij}\mV_j:\,\,S_j\in\cB_i]),
\end{align*}
for $i=1,2,\ldots,M$. Then the interference alignment approach seeks to design precoding matrices that satisfy the following conditions for some assignment of the entries of $\vdel$ from $\bF_q$:
\begin{equation}\label{eqn:alignment}
\textrm{dim}(\cU_i)=La,\,\,\textrm{dim}(\cW_i)=b,\,\,\textrm{dim}(\cU_i\cap\cW_i)=0,
\end{equation}
for $i=1,2,\ldots,M$. The constraint on the dimension of $\cW_i$ maps the interference vectors to a single subspace at each destination. The constraint on the dimension of $\cU_i\cap\cW_i$ guarantees that the subspace spanned by the interference vectors is linearly independent of the subspace spanned by the desired vectors; this along with the constraint on the dimension of $\cU_i$ permits error-free recovery of the desired messages. Therefore, $S_i$ can transmit $a$ symbols in $n$ transmissions, thereby achieving a rate of $\frac{a}{n}$ -- we refer to this network coding scheme as \emph{precoding-based network alignment} (PBNA), along lines of \cite{meng12}.

\textbf{Interference Graph:} We define an undirected bipartite graph $\cH=(\cX,\cY,\cF)$, where $\cX=\{S_1,S_2,\ldots,S_K\}$, $\cY=\{\cW_1,\cW_2,\ldots,\cW_M\}$ are the node partitions, and $\cF$ is the set of undirected edges such that $(S_j,\cW_i)\in\cF$ if and only if $S_j\in\cB_i$. Thus, $\cH$ encodes the set of sources whose signals act as interference, and therefore, need to be aligned/mapped to a single subspace at each destination - hence, we refer to it as the \emph{interference graph}. Note that the topology of $\cH$ has a direct bearing on the achievable rates of the sources; for example, abundant low-degree nodes in $\cY$ and smaller values of $|\cF|$ could result in potentially higher achievable rates due to lesser number of interference terms (and therefore, alignment constraints) at the destinations. We explore this connection, using PBNA as the coding scheme, in the next section.

%An example of interference graph is shown in Figure \ref{abc} for a network setup with $K=4$, $M=3$, $\cA_1=\{S_1,S_2\}$, $\cA_2=\{S_1,S_3\}$, $\cA_3=\{S_3,S_4\}$, and all non-trivial transfer functions. Then the interference graph $\cH$ has $6$ edges -- $(S_1,\cW_3)$, $(S_2,\cW_2)$, $(S_2,\cW_3)$, $(S_3,\cW_1)$, $(S_4,\cW_1)$, $(S_4,\cW_2)$.

\section{Results for Achievable Rates}\label{sec:results}
We first consider the case where the interference graph $\cH$ has no cycles. Then we have the following achievability result:
\begin{thm}\label{thm:nocycle}
If $\cH$ has no cycles, then one can achieve a rate of $\frac{1}{L+1}$ per source using PBNA, provided the finite field size $q$ is chosen to be sufficiently large and certain constraints (checkable in time that is polynomial in $L,|\cF|$ and transfer function degrees) are satisfied by the transfer functions.
\end{thm}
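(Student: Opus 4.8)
The plan is to specialize the general PBNA parameters to $a=1$, $b=1$, and $n=L+1$, so that each precoding matrix $\mV_j$ collapses to a single column vector $\vv_j\in\bF_q[\vdel]^{L+1}$, each $\mM_{ij}$ is an $(L+1)\times(L+1)$ diagonal matrix, and the conditions in (\ref{eqn:alignment}) read as: (i) for every $i$, all interference vectors $\{\mM_{ij}\vv_j:S_j\in\cB_i\}$ are collinear (so $\dim\cW_i=1$), and (ii) the $L$ desired vectors $\{\mM_{ij}\vv_j:S_j\in\cA_i\}$ together with the common interference direction span all of $\bF_q^{L+1}$ (equivalently $\dim\cU_i=L$ and $\cU_i\cap\cW_i=\{0\}$). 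Since the $\mM_{ij}$ are diagonal with generically nonzero entries $m_{ij}(\vxi^{(k)})$, they are invertible over the field of fractions, and collinearity of $\mM_{ij}\vv_j$ with a direction $\mathbf{w}_i$ is equivalent to $\vv_j\parallel\mM_{ij}^{-1}\mathbf{w}_i$.

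First I would use the acyclicity of $\cH$ to satisfy all the collinearity constraints (i) exactly, by propagation. Root each tree component of the forest $\cH$ and process its nodes in breadth-first order. At a root source $S_j$ I assign $\vv_j$ a generic value, e.g.\ a Vandermonde column in a fresh free variable of $\vdel$. Moving outward, whenever I reach an interference node $\cW_i$ from an already-assigned source $S_j$, I set its direction $\mathbf{w}_i:=\mM_{ij}\vv_j$; whenever I reach a source $S_{j'}$ from an already-assigned $\cW_i$, I set $\vv_{j'}:=\mM_{ij'}^{-1}\mathbf{w}_i$. Because $\cH$ is a forest, each node has a unique path to its root, so every $\vv_j$ and $\mathbf{w}_i$ is assigned exactly once and no constraint is revisited; consequently all edges $(S_j,\cW_i)\in\cF$ automatically satisfy $\mM_{ij}\vv_j\parallel\mathbf{w}_i$. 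This is precisely where acyclicity is essential: a cycle in $\cH$ would force a ``closure'' condition equating a product of transfer-function ratios around the cycle to $1$, which generically fails, whereas the forest structure eliminates all such conditions. The resulting $\vv_j$ are vectors of rational functions in $\vdel$, which I clear to polynomials by rescaling, leaving all spans unchanged.

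Next I would handle the full-rank requirements (ii). For each $D_i$, form $\mT_i:=[\,\mM_{ij}\vv_j:S_j\in\cA_i\;|\;\mathbf{w}_i\,]$; conditions (ii) hold precisely when $\det\mT_i\neq0$. These determinants $P_i(\vdel):=\det\mT_i\in\bF_q[\vdel]$ are exactly the constraints named in the statement: the theorem assumes each $P_i\not\equiv0$ as a formal polynomial. The degree of $P_i$ is bounded by a polynomial in $L$, $|\cF|$, and the degrees of the $m_{ij}$ (the path length along which $\mM_{ij}^{-1}$ factors accumulate is at most $|\cF|$), so testing $P_i\not\equiv0$ reduces to probabilistic polynomial identity testing in the claimed running time. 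Given that every $P_i\not\equiv0$, the product $\prod_i P_i$ is a nonzero polynomial of controlled degree, so by the Schwartz--Zippel lemma any $q$ exceeding this degree admits an assignment of $\vdel$ over $\bF_q$ at which all $P_i$ are simultaneously nonzero; there the built-in collinearity (i) and the secured rank conditions (ii) hold together, and PBNA delivers rate $a/n=1/(L+1)$ per source.

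The main obstacle I anticipate is certifying that the desired-signal determinants $P_i$ can actually be nonzero rather than being forced to vanish by the propagation. The difficulty is that a source $S_j$ desired at $D_i$ may have had its precoder $\vv_j$ pinned by the alignment it must meet as an \emph{interferer} at other destinations, and one must rule out that these inherited constraints collapse the rank of $\mT_i$. The leverage again comes from acyclicity: each tree component contributes an independent free root variable to $\vdel$, and along the unique root-to-node paths the accumulated diagonal factors are nonconstant in the $\vxi^{(k)}$, so a careful specialization of the free root variables together with distinct $\vxi^{(k)}$ across the $L+1$ slots places the $L+1$ columns of $\mT_i$ in general position. Making this explicit---exhibiting one assignment at which each $\det\mT_i\neq0$, thereby certifying $P_i\not\equiv0$ under the stated transfer-function constraints---is the technical heart of the argument.
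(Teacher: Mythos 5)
Your proposal follows essentially the same route as the paper: set $a=b=1$, $n=L+1$, propagate precoding directions through each rooted tree of the forest via $\mM_{ij}$ and $\mM_{ij}^{-1}$ factors along unique paths so that all collinearity (alignment) constraints hold identically, then reduce the rank conditions at each destination to non-vanishing of $(L+1)\times(L+1)$ determinants and invoke the Schwartz--Zippel lemma over a sufficiently large $\bF_q$. One clarification: the ``technical heart'' you flag in your final paragraph is not actually a proof obligation --- as you yourself note earlier, the non-triviality of those determinant polynomials (the $r_{ik}(\vdel)$ in the paper's notation) is precisely the polynomial-time-checkable \emph{hypothesis} of the theorem, so the paper neither attempts nor needs an unconditional certification that they are nonzero.
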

\begin{proof}
We prove this achievability result by setting $n=(L+1)$, $a=b=1$, and designing precoding matrices $\mV_i$, $i=1,2,\ldots,K$, that satisfy the relations in \eqref{eqn:alignment}. Note that since $\cH$ has no cycles, it is either a tree or a collection of disjoint trees. We assume there are $c\geq 1$ disjoint trees and denote them by $\cT_l=(\cX_l,\cY_l,\cF_l)$, $l=1,2,\ldots,c$. Thus, $\{\cX_l:\,l=1,2,\ldots,c\}$, $\{\cY_l:\,l=1,2,\ldots,c\}$, $\{\cF_l:\,l=1,2,\ldots,c\}$ are partitions of $\cX,\cY,\cF$ respectively, and $\cH=\cup_{l=1}^c\cT_l$. Note that if $\cH$ is a single spanning tree, then we have $c=1$ and $\cH=\cT_1$.

We handle the disjoint trees separately, i.e., the precoding vectors for sources in $\cX_l$ are designed independently of those for sources in $\cX_k$, $k\neq l$. Given $l\in\{1,2,\ldots,c\}$, we choose any $S_{a_l}\in\cX_l$ as the tree root. Next, we define $\cL_0^{(l)}=\{S_{a_l}\}$, $\cL_1^{(l)}$ as the set of neighbor nodes of $S_{a_l}$ in $\cT_l$, and $\cL_{k+1}^{(l)}$ as the set of neighbors of nodes in $\cL_k^{(l)}$ for $k\geq 1$ (these sets are levels of the BFS tree rooted at $S_{a_l}$). Since $\cT_l$ is a bipartite graph, $\cL_{2k+1}^{(l)}\subseteq \cY_l$ and $\cL_{2k}^{(l)}\subseteq \cX_l$ for $k\geq 0$. Thereafter, we associate a $(L+1)\times(L+1)$ matrix $\mH_{ij}$ with $(S_j,\cW_i)\in\cF_l$:
\[
\mH_{ij}:=
\begin{cases}
\mM_{ij}, & S_j\in\cL_{2k}^{(l)},\,\,\cW_i\in\cL_{2k+1}^{(l)},\,\,k\geq 0,\\
\mM_{ij}^{-1}, & S_j\in\cL_{2k+2}^{(l)},\,\,\cW_i\in\cL_{2k+1}^{(l)},\,\,k\geq 0.
\end{cases}
\]
Thus, by construction, $\mH_{ij}$ is a diagonal matrix with $h_{ij}(\vxi^{(k)})$ as $(k,k)$th entry, such that $h_{ij}(\vxi)\equiv m_{ij}(\vxi)$ for $S_j\in\cL_{2k}^{(l)}$, $\cW_i\in\cL_{2k+1}^{(l)}$, $(S_j,\cW_i)\in\cF_l$, and $h_{ij}(\vxi)\equiv (m_{ij}(\vxi))^{-1}$ for $S_j\in\cL_{2k+2}^{(l)}$, $\cW_i\in\cL_{2k+1}^{(l)}$, $(S_j,\cW_i)\in\cF_l$, for $k\geq 0$.

We set $\mV_{a_l}=[\theta^{(1)}_l\,\,\theta^{(2)}_l\,\cdots\,\theta^{(L+1)}_l]^T$, where $\theta^{(k)}_l$, $k=1,2,\ldots,L+1$, are variables drawing values from $\bF_q$. Since $\cH$ is a collection of trees, there exists a unique path between $S_u$ and $S_v$, say $\cP_{uv}$, if they are connected to each other via edges. Given $i\neq a_l$ and $S_i\in\cX_l$, we set $\mV_i=\mT_i\mV_{a_l}$, where
\[
\mT_i=\prod_{(u,v):(S_v,\cW_u)\in\cP_{i,a_l}}\mH_{uv}.
\]
$\mT_i$ is a diagonal matrix with $(k,k)$th entry as $t_i(\vxi^{(k)})$, where
\[
t_i(\vxi)\equiv \prod_{(u,v):(S_v,\cW_u)\in\cP_{i,a_l}}h_{uv}(\vxi).
\]
This choice of precoding vectors ensures $\mM_{ij}\mV_j=\mM_{ik}\mV_k$ if $S_j,S_k\in\cB_i$ and $|\cB_i|\geq 2$. Therefore, the constraint $\textrm{dim}(\cW_i)=1$ is satisfied for all $i$ (this is trivially satisfied if $|\cB_i|=1$). Also, the constraints $\textrm{dim}(\cW_i)=1$, $\textrm{dim}(\cU_i\cap\cW_i)=0$ are satisfied if and only if the set of vectors $\{\mM_{ij}\mV_j:\,S_j\in\cA_i\}$ and $\mM_{ik}\mV_k$, for any $k\in\cB_i$, form a full rank $(L+1)\times (L+1)$ matrix, say $\mR_{ik}$. Note that the entries of $\mR_{ik}$ are rational functions based on polynomials in $\bF_q[\vdel]$, where $\vdel$ comprises of variables in $\vxi^{(k)}$ and $\theta^{(k)}_l$, $k=1,2,\ldots,L+1$, $l=1,2,\ldots,c$.

The fact whether $\mR_{ik}$ is full rank or not can be checked by computing the determinant of $\mR_{ik}$ -- if the determinant is a rational function with non-zero numerator-denominator product, say $r_{ik}(\vdel)\in\bF_q[\vdel]$, then $\mR_{ik}$ is full rank, else it is not. Also, computing these determinant values require time that is polynomial in $L,|\cF|$ and the transfer function degrees. Therefore, we need the following polynomial to be non-trivial:
\[
f(\vdel)=\prod_{k=1}^{L+1}\prod_{(i,j):m_{ij}(\vxi)\not\equiv 0}m_{ij}(\vxi^{(k)})\prod_{i=1}^K\prod_{k\not\in\cA_i}r_{ik}(\vdel),
\]
for satisfying all constraints in \eqref{eqn:alignment}. An assignment of $\vdel$ from $\bF_q^{(L+1)(s+c)}$ that makes $f(\vdel)$ non-zero is guaranteed for large enough finite field size $q$ using a simplified version of Schwartz-Zippel Lemma \cite{das10,meng12}. Therefore, this assignment of $\vdel$ enables each source to transmit at rate of $\frac{1}{L+1}$.
\end{proof}

Thus, the absence of cycles in the interference graph enables one to choose a set of precoding matrices/vectors independently of each other (corresponding to the sources that are chosen as roots of the disjoint trees in the above proof) and use them to construct precoding matrices/vectors for the remaining sources. Moreover, PBNA makes use of exactly $(L+1)$ transmissions to enable each source to transmit one message, thereby achieving a total sum rate of $\frac{K}{L+1}$.

The presence of cycles in the interference graph can impose restrictions on the precoding matrices that may the affect the ease of using alignment schemes. We illustrate this using a $4\times 4$ network -- we set $\cA_1=\{S_3,S_4\}$, $\cA_2=\{S_1,S_4\}$, $\cA_3=\{S_1,S_2\}$, $\cA_4=\{S_2,S_3\}$, and assume all the transfer functions are non-trivial. We also define the following rational function:
\[
t(\vxi)\equiv \frac{m_{12}(\vxi)m_{23}(\vxi)m_{34}(\vxi)m_{41}(\vxi)}
{m_{11}(\vxi)m_{22}(\vxi)m_{33}(\vxi)m_{44}(\vxi)}.
\]
It is easy to see that the resulting interference graph $\cH$ is a cycle;
%as shown in Figure \ref{abc}, with $8$ edges -- $(S_1,\cW_1)$, $(S_1,\cW_4)$, $(S_2,\cW_1)$, $(S_2,\cW_2)$, $(S_3,\cW_2)$, $(S_3,\cW_3)$, $(S_4,\cW_3)$, $(S_4,\cW_4)$.
we now have the following negative result for this setup:
\begin{thm}\label{thm:cycle}
If $\cH$ is the cycle interference graph of the network described above and $t(\vxi)$ is a non-constant rational function (i.e., $t(\vxi)\not\equiv a$, $a\in\bF_q$), then one cannot achieve a rate of $\frac{1}{3}$ per source in finite number of transmissions.
\end{thm}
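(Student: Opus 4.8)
The plan is to prove the impossibility by contradiction: assume a rate of $\frac{1}{3}$ is achievable in some finite number $n$ of transmissions, which by the rate definition forces $a=1$ and $n=3$ (since $La+b=2a+b$ must be at most $n$, and $a/n=1/3$ with $a=1$ gives $n=3$, leaving $b=1$). Thus each source $S_j$ has a single precoding vector $\mv_j\in\bF_q[\vdel]^{3\times 1}$, and at each destination $D_i$ the two desired streams and the single (aligned) interference stream must span all of $\bF_q^3$. First I would write out the alignment constraints explicitly for each of the four destinations. The key structural fact is that each destination sees exactly two interferers, so $\textrm{dim}(\cW_i)=1$ forces the two interference vectors to be collinear; these collinearity relations are the heart of the argument.

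Next I would convert the four collinearity conditions into scalar equations. For instance, at $D_1$ (which wants $S_3,S_4$ and is interfered by $S_1,S_2$) the condition $\mM_{11}\mv_1\parallel\mM_{12}\mv_2$ means the diagonal matrix $\mM_{11}^{-1}\mM_{12}$ maps $\mv_2$ to a scalar multiple of $\mv_1$; writing this entrywise across the three transmissions $k=1,2,3$ yields $m_{12}(\vxi^{(k)})/m_{11}(\vxi^{(k)})\cdot v_{2}^{(k)}=\lambda_1 v_1^{(k)}$ for a common constant $\lambda_1\in\bF_q[\vdel]$. Repeating this around the cycle for $D_2,D_3,D_4$ gives analogous proportionalities involving the ratios $m_{23}/m_{22}$, $m_{34}/m_{33}$, $m_{41}/m_{44}$. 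Multiplying the four relations together so that every precoding-vector component cancels telescopically around the cycle, I expect to obtain, componentwise, an identity of the form $t(\vxi^{(k)})=\lambda_1\lambda_2\lambda_3\lambda_4$ for each $k=1,2,3$, where the right-hand side is a single constant independent of $k$. This would force $t(\vxi^{(1)})=t(\vxi^{(2)})=t(\vxi^{(3)})$.

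The contradiction then comes from combining this forced equality with the hypothesis that $t(\vxi)$ is non-constant together with the full-rank (decodability) requirement at the destinations. Because the desired and interference subspaces must jointly span $\bF_q^3$, the three evaluation points $\vxi^{(1)},\vxi^{(2)},\vxi^{(3)}$ cannot be collapsed arbitrarily; in particular the vectors $\mv_j$ must have genuinely distinct behavior across the three transmissions, which the alignment equations can only supply if the ratios $m_{ij}/m_{ii}$ vary with $k$. I would argue that if $t(\vxi)$ is non-constant then one can (after the Schwartz–Zippel-style generic-assignment reasoning used in Theorem \ref{thm:nocycle}) choose the transmissions so that $t(\vxi^{(1)}),t(\vxi^{(2)}),t(\vxi^{(3)})$ are not all equal, yet decodability demands they \emph{be} equal — and conversely, forcing them equal collapses a rank and destroys the $\textrm{dim}(\cU_i\cap\cW_i)=0$ condition at some destination. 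Either horn yields the contradiction.

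The main obstacle I anticipate is making the telescoping step fully rigorous: I must verify that the four collinearity constants $\lambda_i$ really do combine to give exactly $t(\vxi^{(k)})$ with no residual factors, and that the per-component scalars from the precoding vectors cancel cleanly around the cycle rather than leaving behind a $k$-dependent residue that could absorb the variation of $t$. Handling the degenerate cases — where some $v_j^{(k)}$ vanishes, which would break a proportionality — also requires care, since such vanishing would itself drop the rank of the relevant $\mR_{ik}$ and violate decodability, so these cases should be ruled out separately rather than being allowed to escape the telescoping argument.
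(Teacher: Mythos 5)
Your opening step is where the proof breaks: achieving rate $\frac{1}{3}$ does \emph{not} force $a=1$, $n=3$. The symbol-extension framework requires $a\leq b$ and $n\geq La+b=2a+b$, so $\frac{a}{n}=\frac{1}{3}$ forces $a=b$ and $n=3a$ --- but $a$ remains an arbitrary positive integer. Since the theorem asserts impossibility for \emph{every} finite number of transmissions, you must also rule out vector precoding with $a\geq 2$, and there your componentwise collinearity argument fails: with two interferers per destination, $\textrm{dim}(\cW_i)=a$ only says the two interference \emph{subspaces} coincide, i.e.\ $\mM_{11}\mV_1\mA_1=\mM_{12}\mV_2$ for some full-rank $a\times a$ matrix $\mA_1$ (and similarly around the cycle), not that vectors are proportional with scalar constants $\lambda_i$. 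The paper telescopes these matrix relations around the cycle to obtain $\mT\mV_1=\mV_1\mA$ with $\mA=\mA_1\mA_2\mA_3\mA_4$ and $\mT$ diagonal with entries $t(\vxi^{(k)})$; rowwise this reads $t(\vxi^{(k)})\vv_k=\vv_k\mA$, so every nonzero row of $\mV_1$ is a left eigenvector of the $a\times a$ matrix $\mA$ with eigenvalue $t(\vxi^{(k)})$. Since $t$ is non-constant, the values $t(\vxi^{(k)})$ (rational functions in disjoint variable blocks $\vxi^{(k)}$) are distinct, eigenvectors for distinct eigenvalues are linearly independent, and $\mA$ has at most $a$ of them --- so at least $n-a=2a$ rows of $\mV_1$, and hence (through the alignment relations and invertibility of the diagonal transfer matrices) of $\mV_2,\mV_3,\mV_4$, are zero. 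All signal and interference vectors at each destination are then supported on at most $a$ coordinates, forcing $\textrm{dim}(\cU_i\cap\cW_i)>0$ and destroying decodability. Your scalar telescoping is exactly the $a=1$ shadow of this argument; without the eigenvector machinery it cannot touch $a\geq 2$. (The paper also disposes of the case $a<b$ separately: there the rate is at most $\frac{1}{2+(b/a)}<\frac{1}{3}$, which is why approaching $\frac13$ demands $b/a\to 1$ and unboundedly many transmissions.)

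A second, quantifier-level problem afflicts your endgame even for $a=1$: the code designer chooses the assignments $\vxi^{(1)},\ldots,\vxi^{(n)}$, so an impossibility proof cannot ``choose the transmissions so that $t(\vxi^{(1)}),t(\vxi^{(2)}),t(\vxi^{(3)})$ are not all equal'' --- the designer will simply do the opposite. The horn you actually need, namely that forcing these values equal collapses a rank at some destination, is asserted in your sketch but never argued, and it is the entire content of the proof. The paper avoids the case split altogether by working with the relation $t(\vxi^{(k)})\vv_k=\vv_k\mA$ over the indeterminates, where non-constancy of $t$ makes the $n$ eigenvalues genuinely distinct. Note also that your instinct to treat vanishing components $v_j^{(k)}$ as a degenerate side case to be excluded is inverted: in the paper's proof the forced vanishing of $n-a$ rows of every precoding matrix is not a pathology to rule out but the central mechanism by which decodability fails.
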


\begin{proof}
Note that $L=2$ for this case, therefore, we require $a=b$ and $n=2a+b$ to achieve a rate of $\frac{a}{n}=\frac{1}{3}$ per source. Then the constraints $\textrm{dim}(\cW_i)=a$ for all $i$, as given in \eqref{eqn:alignment}, imply that precoding matrices satisfy the following relations:
\begin{align*}
\mM_{11}\mV_1\mA_1 &= \mM_{12}\mV_2, \quad\mM_{22}\mV_2\mA_2 = \mM_{23}\mV_3,\\
\mM_{33}\mV_3\mA_3 &= \mM_{34}\mV_4, \quad\mM_{44}\mV_4\mA_4 = \mM_{41}\mV_1,
\end{align*}
where $\mA_i$, $i=1,2,3,4$, are full rank $a\times a$ matrices. These relations result in the equation $\mT\mV_1=\mV_1\mA$, where $\mT=\mM_{12}\mM_{23}\mM_{34}\mM_{41}(\mM_{11}\mM_{22}\mM_{33}\mM_{44})^{-1}$ and $\mA = \mA_1\mA_2\mA_3\mA_4$; this imposes restrictions on choices of $\mV_1$. $\mT$ is a diagonal matrix with its $(k,k)$th entry as $t(\vxi^{(k)})$. Thus, we have $t(\vxi^{(k)})\vv_k = \vv_k\mA$, where $\vv_k$ is the $k$th row of $\mV_1$, $k=1,2,\ldots,n$. This means if $\vv_k$ is not the zero vector for some $k$, then it is one of the left eigenvectors of $\mA$ and $t(\vxi^{k})$ is the corresponding eigenvalue \cite{Foote}. Since $t(\vxi)$ is not a constant, the eigenvectors form a linearly independent set and $\mA$ is full rank, $\vv_k$ is the zero vector for $(n-a)$ instances of $k$, i.e., $(n-a)$ rows of $\mV_1$ are zero vectors. Then the four alignment relations stated above imply that the corresponding $(n-a)$ rows of $\mV_i$, $i=2,3,4$, are also zero vectors. One can check that these precoding matrices satisfy $\textrm{dim}(\cU_i\cap\cW_i)>0$ for all $i$, that makes recovery of desired messages impossible at each destination. Therefore, the sources cannot achieve a rate of $\frac{1}{3}$ each with $a= b$, using PBNA. However, if $a<b$ and the relations in \eqref{eqn:alignment} could be satisfied by some choice of precoding matrices, the achievable rate per source would be at most $\frac{a}{2a+b}=\frac{1}{2+(b/a)}$. Hence, the only possibility for achieving rate close to $\frac{1}{3}$ per source is to choose $a,b$ large enough such that $\frac{b}{a}$ is very close to one; this in turn introduces the requirement that the number of transmissions $n$ should be large.
\end{proof}

Thus, the presence of cycles in the interference graph can result in PBNA requiring large number of transmissions for each source to achieve a rate close to $\frac{1}{L+1}$ and sum rate close to $\frac{K}{L+1}$. The reason for this, as observed in the network example above, are restrictions arising on the choice of precoding matrices. One way of tackling this problem is to allow the destinations to decode some of the interference messages, i.e., $D_i$ agrees to decode messages from some sources in $\cB_i$ along with those from sources in $\cA_i$. This approach reduces the number of relations in \eqref{eqn:alignment} to be satisfied, thereby effectively removing edges from the interference graph $\cH$. For example, if $D_i$ decodes messages from $S_j\in\cB_i$ ($\cB_i\neq \emptyset$), the alignment constraints involving $\mV_j$ that need to be satisfied at $D_i$ get eliminated; this is equivalent to removing $(S_j,\cW_i)\in\cF$ from $\cH$. However, the tradeoff of this approach is decrease in the transmission rate of sources since each destination has to recover messages from potentially more than $L$ sources.

We define $\cE_i\subseteq \cB_i$ as the set of \emph{extra} sources whose messages are decoded by $D_i$, so that $D_i$ now recovers messages from sources in $\bar{\cA}_i=\cA_i\cup\cE_i$, and the new interfering set of sources for $D_i$ is $\bar{\cB}_i=\cB_i\setminus\cE_i$. This corresponds to removing the set of edges $\{(S_j,\cW_i):S_j\in\cE_i\}$ from $\cH$ to get a new interference graph $\bar{\cH}=(\cX,\cY,\bar{\cF})$, where $\bar{\cF}=\{(S_j,\cW_i):S_j\in\bar{\cB}_i\}$. Our objective is to remove cycles in $\bar{\cH}$ -- thereafter, we can use PBNA to achieve certain source rates. In particular, we have the following achievability result:
\begin{thm}
Suppose $\bar{\cH}$, generated from an interference graph $\cH$ as described above, has no cycles, and let $d=\max\{|\cE_i|:\,i=1,2,\ldots,M\}$. Then one can achieve a rate of $\frac{1}{L+d+1}$ per source using PBNA, provided the finite field size $q$ is chosen to be sufficiently large and certain constraints (checkable in time that is polynomial in $L$, $d$, $|\bar{\cF}|$ and transfer function degrees) are satisfied by the transfer functions.
\end{thm}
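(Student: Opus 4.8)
The plan is to reduce this statement to Theorem \ref{thm:nocycle} by running the same tree-based construction on the acyclic graph $\bar{\cH}$, but now with the symbol-extension length set to $n=L+d+1$ and $a=b=1$. The key observation is that, once a destination $D_i$ commits to decoding the extra sources in $\cE_i$, its desired set becomes $\bar{\cA}_i=\cA_i\cup\cE_i$ with $|\bar{\cA}_i|=L+|\cE_i|\le L+d$, while its residual interfering set shrinks to $\bar{\cB}_i=\cB_i\setminus\cE_i$. Reinterpreting $\cU_i,\cW_i$ as the spans taken over $\bar{\cA}_i,\bar{\cB}_i$, the alignment conditions \eqref{eqn:alignment} become $\textrm{dim}(\cU_i)=L+|\cE_i|$, $\textrm{dim}(\cW_i)=1$, and $\textrm{dim}(\cU_i\cap\cW_i)=0$. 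Since $(L+|\cE_i|)+1\le L+d+1=n$ for every $i$, there is always enough room in $\bF_q[\vdel]^{\,n}$ to realize all of these simultaneously, with destinations attaining $|\cE_i|=d$ being the tight (square) case and those with $|\cE_i|<d$ simply carrying some redundant dimensions at the same per-source rate $\frac{1}{n}$.

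First I would decompose $\bar{\cH}$, which is a forest by hypothesis, into its disjoint trees $\cT_l=(\cX_l,\cY_l,\bar{\cF}_l)$, $l=1,\ldots,c'$, pick a root $S_{a_l}$ in each, assign a free column $\mV_{a_l}=[\theta^{(1)}_l\,\cdots\,\theta^{(n)}_l]^T$, and propagate $\mV_i=\mT_i\mV_{a_l}$ along the unique tree path $\cP_{i,a_l}$ exactly as in the proof of Theorem \ref{thm:nocycle}, with the diagonal matrices $\mH_{ij}$ defined by BFS-level parity. Because alignment is now required only across the edges that survive in $\bar{\cH}$, this construction again forces $\mM_{ij}\mV_j=\mM_{ik}\mV_k$ whenever $S_j,S_k\in\bar{\cB}_i$, so $\textrm{dim}(\cW_i)=1$ holds automatically for every $i$. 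What remains is decodability: the $L+|\cE_i|$ desired vectors $\{\mM_{ij}\mV_j:S_j\in\bar{\cA}_i\}$ together with any single aligned interference vector $\mM_{ik}\mV_k$, $k\in\bar{\cB}_i$, must be linearly independent in $\bF_q[\vdel]^{\,n}$.

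The step that differs from Theorem \ref{thm:nocycle}, and where I expect the real work, is that these decoding matrices are generally \emph{rectangular}: the matrix $\mR_{ik}$ obtained by stacking the $L+|\cE_i|$ desired columns and the interference column is $n\times(L+|\cE_i|+1)$, which is square only when $|\cE_i|=d$. Full column rank is therefore certified not by a determinant but by the non-vanishing of at least one $(L+|\cE_i|+1)$-minor, whose numerator--denominator product I would call $r_{ik}(\vdel)\in\bF_q[\vdel]$; each such minor is computable in time polynomial in $L$, $d$, $|\bar{\cF}|$ and the transfer-function degrees. I would then assemble the single product polynomial
\[
f(\vdel)=\prod_{k=1}^{n}\prod_{(i,j):m_{ij}(\vxi)\not\equiv 0}m_{ij}(\vxi^{(k)})\;\prod_{i=1}^{M}\prod_{k\in\bar{\cB}_i}r_{ik}(\vdel),
\]
whose first factor keeps every transfer function invertible so that the $\mH_{ij}^{-1}$ used in the propagation remain well defined. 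The hypothesis of the theorem is precisely that $f(\vdel)\not\equiv 0$ (these are the ``certain constraints''), and the main obstacle is arguing that this non-triviality can indeed hold: one must verify that the extra desired directions contributed by $\cE_i$ do not, as formal rational functions, collapse into the span of the $\cA_i$-directions or into the one-dimensional aligned-interference subspace. Once $f\not\equiv 0$ is granted, a simplified Schwartz--Zippel argument \cite{das10,meng12} yields an assignment of $\vdel$ from $\bF_q^{\,n(s+c')}$ with $f(\vdel)\neq 0$ for sufficiently large $q$, giving each source the rate $\frac{1}{L+d+1}$ via PBNA.
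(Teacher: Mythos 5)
Your proposal is correct in substance but takes a genuinely different route from the paper. The paper does not generalize the machinery of Theorem \ref{thm:nocycle} at all: it reduces to it by \emph{padding}. For every destination with $|\bar{\cA}_i|<L+d$ and/or $\bar{\cB}_i=\emptyset$, the paper introduces auxiliary dummy sources whose transfer functions to $D_i$ are fresh variables (e.g., $\eta_{i1},\eta_{i2},\gamma_{i3}$), enlarging $\bar{\cA}_i$ to size exactly $L+d$ and forcing $\bar{\cB}_i\neq\emptyset$; after this every decoding matrix is square of size $(L+d+1)\times(L+d+1)$ and Theorem \ref{thm:nocycle} is invoked as a black box with $L$ replaced by $L+d$. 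You instead rerun the tree construction directly on $\bar{\cH}$ with $n=L+d+1$ and handle the nonuniformity of $|\cE_i|$ head-on, replacing the determinant certificate by full column rank of the rectangular $n\times(L+|\cE_i|+1)$ matrix $\mR_{ik}$, certified by a nontrivial maximal minor, with the non-triviality correctly relegated to the theorem's ``certain constraints'' hypothesis (as the paper also does). Both routes are sound: the paper's padding keeps all the algebra square and makes the proof a two-line reduction, while your version avoids modifying the network model with artificial sources and makes explicit where the slack dimensions for destinations with $|\cE_i|<d$ actually sit.

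Two loose ends in your write-up, one of which the paper explicitly flags. First, destinations with $\bar{\cB}_i=\emptyset$ (those that decode \emph{all} of $\cB_i$): for such $i$ your inner product $\prod_{k\in\bar{\cB}_i}r_{ik}(\vdel)$ is empty, so $f(\vdel)$ contains no certificate that the $L+|\cE_i|$ desired columns $\{\mM_{ij}\mV_j:S_j\in\bar{\cA}_i\}$ are linearly independent at $D_i$; you must add, for each such $i$, a nontrivial maximal minor of the desired-columns matrix alone (or adopt the paper's fix of attaching a dummy interfering source so that $\bar{\cB}_i\neq\emptyset$ always). Second, a complexity quibble: certifying full column rank via ``at least one nonvanishing $(L+|\cE_i|+1)$-minor'' involves $\binom{L+d+1}{\,d-|\cE_i|\,}$ candidate minors, which can be superpolynomial in $L$ and $d$; to sustain the claimed polynomial-time checkability you should certify rank by symbolic Gaussian elimination over the rational-function field (or randomized evaluation), rather than by enumerating minors. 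Neither issue is fatal, and both are patchable within your framework without changing its structure.
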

\begin{proof}
Note that $|\bar{\cA}_i|=L+|\cE_i|\leq L+\min(d,|\cB_i|)$, and $|\bar{\cA}_i|=L+d$ for at least one $D_i$. If $|\bar{\cA}_i|=L+d$ and $\bar{\cB}_i\neq\emptyset$ for all $i$, since $\bar{\cH}$ has no cycles, we can directly apply Theorem \ref{thm:nocycle} to achieve a rate of $\frac{1}{L+d+1}$ per source using PBNA under certain constraints. The only other case is that $|\bar{\cA}_i|<L+d$ and/or $\bar{\cB}_i=\emptyset$ for some values of $i$ ($\bar{\cB}_i=\emptyset$ implies $D_i$ chooses to decode messages from all sources in $\cB_i$). Then we can introduce unique artificial transfer functions and auxiliary sources to make $|\bar{\cA}_i|=L+d$ and $\bar{\cB}_i\neq \emptyset$ for each such $i$. For example, if $|\bar{\cA}_i|=L+d-2$ and $\bar{\cB}_i=\emptyset$ for some $i$, we construct three dummy sources, say $S_1',S_2',S_3'$, and assume that the corresponding transfer functions with respect to $D_i$ are variables $\eta_{i1}$, $\eta_{i2}$ and $\gamma_{i3}$ respectively (that take values from $\bF_q$). Thereafter, we make the updates $\bar{\cA}_i\leftarrow \bar{\cA}_i \cup\{S_1',S_2'\}$ and $\bar{\cB}_i\leftarrow \{S_3'\}$, so that $|\bar{\cA}_i|=L+d$ and $|\bar{\cB}_i|=1$. Thus, this procedure ensures $|\bar{\cA}_i|=L+d,\,\bar{\cB}_i\neq\emptyset$ for all $i$, and we can use Theorem \ref{thm:nocycle} to complete the achievability proof.
\end{proof}

If $\cH$ has cycles, there can be multiple candidates for subgraph $\bar{\cH}$ that has no cycles. Since we want to maximize the transmission rates for the sources, we are interested in the smallest value that $d$ can take -- we refer to this as $d^*$. Therefore, we have the following graph-theoretic optimization problem over $\cH$ -- what is the minimum value of $d$ so that if we remove some set of $\min(d,|\cB_i|)$ edges from node $\cW_i\in\cY$ ($|\cB_i|$ is the degree of node $\cW_i$), the resulting graph $\bar{\cH}$ has no cycles? We first assume that $\cH$ is a connected graph. Then a modified optimization problem, that gives the same optimal value $d^*$, is -- what is the minimum value of $d$ so that if we remove at most $d$ edges from each node in $\cY$, the resulting graph $\cK$ is a spanning tree of $\cH$? We denote the optimal $\bar{\cH}$ and $\cK$, obtained as solutions to these optimization problems, by $\bar{\cH}^*$ and $\cK^*$ respectively. Note that $\bar{\cH}^*$ can obtained from $\cK^*$ by removing edges from $\cK^*$, if needed, such that the difference between degrees of $\cW_i$ in $\cH$ and $\bar{\cH}^*$ is $\min(d^*,|\cB_i|)$.

To obtain an algorithm for this, we make use of the concepts from matroid theory; refer to \cite{CLRS} for details. Given that $\cH$ is connected, we consider its graphic matroid $\cM$, i.e., the collection of all acyclic edge-sets of $\cH$ (i.e. tree/forest subgraphs of $\cH$) -- the bases (maximal elements) of this matroid are the spanning trees of $\cH$. The dual of $\cM$, denoted by $\bar{\cM}$, is defined as the collection of all edge-sets of $\cH$ whose complement graph contains a spanning tree of $\cH$ (or, is connected) - from matroid theory, we have that this is also a matroid (often called the co-graphic or bond matroid). Finally, given $d$, we have the partition matroid $\cM_d$ composed of all edge-sets of $\cH$ such that at most $d$ edges are chosen from each node in $\cY$. As the matroid property is closed under intersections, $\bar{\cM}\cap \cM_d$ is also a matroid. We define an (arbitrary) labeling of edges of $\cH$ as $\cF=\{e_1,e_2,\ldots,e_{|\cF|}\}$. Using these definitions, we make use of Algorithm \ref{alg:greedy} to obtain $d^*$ and $\cK^*$ for connected $\cH$.

\begin{thm}
Given any (connected) bipartite graph $\cH$, Algorithm \ref{alg:greedy} finds $d^*$ and the optimal spanning tree $\cK^*$ for $\cH$, with a computational complexity of $O(\left(1+\frac{K}{M}\right)|\cF|^2)$.
\end{thm}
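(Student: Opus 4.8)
The plan is to recast the optimization over $\cH$ as a matroid intersection problem and then argue that the incremental augmenting-path procedure of Algorithm~\ref{alg:greedy} both terminates at the optimum and does so within the stated running time. First I would fix the correspondence: a set $R\subseteq\cF$ can be deleted to leave a spanning tree $\cK=\cF\setminus R$ exactly when $R$ is a basis of the cographic matroid $\bar{\cM}$ (equivalently, $\cF\setminus R$ is connected and acyclic), and the requirement of deleting at most $d$ edges at each $\cW_i$ says precisely that $R$ is independent in the partition matroid $\cM_d$. Since $\cH$ is connected, $\mathrm{rank}(\bar{\cM})=|\cF|-(K+M-1)$, so a level $d$ is feasible if and only if $\bar{\cM}$ and $\cM_d$ admit a common independent set of this size; thus $d^*$ is the least $d$ for which the maximum common independent set of $\bar{\cM}$ and $\cM_d$ has cardinality $|\cF|-(K+M-1)$.

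Next I would establish that $d^*$ is well defined and that the search over $d$ can proceed monotonically. Because any set independent in $\cM_d$ is also independent in $\cM_{d+1}$, the maximum common independent set size is nondecreasing in $d$; hence feasibility is a threshold property and a solution at some level guarantees solutions at all larger levels. This is exactly what lets the algorithm carry a single common independent set $I$ while raising $d$: $I$ remains independent in both $\bar{\cM}$ and $\cM_{d+1}$, so no work is discarded when $d$ increases.

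For correctness I would invoke the matroid-intersection augmenting-path theorem. The algorithm maintains a common independent set $I$ and repeatedly searches for an augmenting path in the exchange graph on $\cF$ (with arcs $x\to y$ when $I-x+y\in\bar{\cM}$, arcs $y\to x$ when $I-x+y\in\cM_d$, and the usual source and sink sets); each such path raises $|I|$ by one, and when none exists $I$ is a maximum common independent set at the current $d$. The algorithm increments $d$ only in that situation and only while $|I|<|\cF|-(K+M-1)$, at which moment the absence of an augmenting path certifies, through the min-max identity $\min_{A\subseteq\cF}\big(\mathrm{rank}_{\bar{\cM}}(A)+\mathrm{rank}_{\cM_d}(\cF\setminus A)\big)$, that level $d$ is infeasible. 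Monotonicity then excludes every smaller level, so the final $d$ equals $d^*$ and $\cK^*=\cF\setminus I$ is an optimal spanning tree; recovering $\bar{\cH}^*$ by deleting further edges down to $\min(d^*,|\cB_i|)$ removals per $\cW_i$ is then immediate.

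The running-time bound is where the main effort lies. The number of successful augmentations equals the final size $|I|=|\cF|-(K+M-1)=O(|\cF|)$, and the number of failed searches that trigger an increment of $d$ is at most $d^*$, which is bounded by the largest $\cY$-degree and hence by $K$. A naive matroid-intersection implementation builds the full exchange graph at each search and runs a breadth-first scan, costing $O(|\cF|^2)$ per search and $O(|\cF|^3)$ overall, so the crux is to beat this. The saving comes from the structure of the two matroids: the $\cM_d$-arcs only connect elements sharing a $\cY$-part, so they are generated in time proportional to the part sizes (whose average is the $\cY$-degree $|\cF|/M$, the source of the $K/M$ term), while the $\bar{\cM}$-arcs are exactly the fundamental-cycle edges of $\cF\setminus I$ and are read off from a single linear-time traversal. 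The hard part will be the amortized accounting that combines these two observations across all $O(|\cF|)$ searches to yield the sharper $O\big((1+\tfrac{K}{M})|\cF|^2\big)$ figure and to confirm that the per-part bookkeeping never exceeds this budget over all $d\le d^*$. I expect the correctness arguments to follow routinely from the matroid-intersection theorem, with this tight complexity estimate being the only genuinely delicate step.
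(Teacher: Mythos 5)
There is a genuine gap: your proof analyzes a different algorithm from the one the theorem is about. Algorithm~\ref{alg:greedy} is a single greedy pass for each candidate $d$: it resets $\cI\leftarrow\emptyset$, scans the edges $e_1,\ldots,e_{|\cF|}$ in a fixed order, adds $e_i$ whenever $\cI\cup\{e_i\}\in\bar{\cM}\cap\cM_d$, and then simply tests whether $|\cI|=|\cF|-K-M+1$. It builds no exchange graph, performs no augmenting-path searches, and does not carry $\cI$ across increments of $d$. So your correctness argument --- augmenting paths, the matroid-intersection min-max identity certifying infeasibility of a level $d$ --- establishes properties of a hypothetical matroid-intersection routine, not of Algorithm~\ref{alg:greedy}. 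The paper's own proof instead asserts that $\bar{\cM}\cap\cM_d$ is itself a matroid (claiming the matroid property is closed under intersection) and then uses the fact that all maximal independent sets of a matroid have the same cardinality: the greedy pass produces a maximal independent set, so the size test decides feasibility of $d$ exactly, and the outer loop over $d$ from $\lceil(|\cF|-K-M+1)/M\rceil$ to $\lfloor|\cF|/M\rfloor$ --- at most $O(1+\frac{K}{M})$ values --- returns the smallest feasible $d$. Ironically, your implicit instinct that plain greedy does not maximize a common independent set of two matroids is sound (intersections of matroids are not matroids in general), and it points at the weak spot in the paper's argument; but as a proof of the stated theorem your proposal fails, because the theorem is a claim about this specific greedy algorithm.

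The complexity bound is also not established in your write-up. You concede that the augmenting-path implementation naively costs $O(|\cF|^3)$ and defer the improvement to an ``amortized accounting'' that you describe as the delicate step and do not carry out; as written, the stated bound $O\left(\left(1+\frac{K}{M}\right)|\cF|^2\right)$ is a conjecture. In the paper the bound is immediate from the actual algorithm's structure: each membership test $\cI\cup\{e_i\}\in\bar{\cM}\cap\cM_d$ is a single BFS/DFS in $O(|\cF|)$ time, the inner loop runs $|\cF|$ times, and the outer loop runs at most $\lfloor|\cF|/M\rfloor-\lceil(|\cF|-K-M+1)/M\rceil+1=O(1+\frac{K}{M})$ times --- no amortization over exchange-graph searches is needed. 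Note also that your bound of ``at most $d^*\leq K$ failed searches'' does not yield the $(1+\frac{K}{M})$ factor; that factor comes from the width of the search interval for $d$ divided by nothing more than the loop count itself, which your accounting never derives.
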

\begin{proof}
Note that $d^*$ lies between $\lceil(|\cF|-K-M+1)/M\rceil$ and $\lfloor|\cF|/M\rfloor$ since $\cH$ is connected, number of edges in its spanning tree is $K+M-1$, and at most $Md$ edges are removed from $\cH$ to obtain $\cK^*$. Also, the edge-set of the complement of $\cK^*$ is a maximal independent set of $\bar{\cM}\cap\cM_{d^*}$. The inner loop of the algorithm corresponds to the greedy approach for generating a maximal independent set of $\bar{\cM}\cap\cM_d$ for given $d$. The outer loop of the algorithm checks if the complement of the maximal independent set forms a spanning tree for $\cH$ by examining if the size of the obtained maximal independent set is $|\cF|-K-M+1$ or not. This, along with the fact that all maximal independent sets of a matroid have the same size, shows that the algorithm returns $d^*$, $\cK^*$ as answers.

The membership of $\cI\cup\{e_i\}$ in $\bar{\cM}\cup\cM_d$ for each $i$ and $d$ can be checked using BFS (or DFS) algorithm in $O(|\cF|)$ time. The inner for-loop runs $|\cF|$ times and the outer for-loop runs at most $\left(1+\frac{K}{M}\right)$ times. Therefore, the algorithm has an
overall computational complexity of $O(\left(1+\frac{K}{M}\right)|\cF|^2)$.
\end{proof}

\setlength{\textfloatsep}{0pt}
\begin{algorithm}[t]
{ \fontsize{10}{10}
\caption{Finding $d^*$ and $\cK^*$ for connected $\cH$}\label{alg:greedy}
\begin{algorithmic}
\REQUIRE $\cH=(\cX,\cY,\cF)$, $\cF=\{e_1,e_2,\ldots,e_{|\cF|}\}$
\FOR{$d=\lceil(|\cF|-K-M+1)/M\rceil$ to $\lfloor|\cF|/M\rfloor$}
    \STATE $\cI\leftarrow \emptyset$
    \FOR{$i=1$ to $|\cF|$}
        \IF{$\cI\cup\{e_i\}\in\bar{\cM}\cap\cM_d$}
            \STATE $\cI\leftarrow\cI\cup\{e_i\}$
        \ENDIF
    \ENDFOR
    \IF{$|\cI|=|\cF|-K-M+1$}
        \STATE \textbf{break}
    \ENDIF
\ENDFOR
\RETURN{$d^*=d$, $\cK^*=(\cX,\cY,\cF\backslash \cI)$}
\end{algorithmic}}
\end{algorithm}

In case $\cH$ is not a connected component, we can apply Algorithm \ref{alg:greedy} on its disjoint components separately and obtain their corresponding optimal values of $d$ and optimal spanning trees. Then $d^*$ is the maximum of the optimal values of $d$ obtained for the disjoint components, and $\bar{\cH}^*$ can be obtained using an edge removal process from the set of disjoint optimal trees similar to the one used for the case of connected $\cH$. If number of components of $\cH$ is $c$, the time complexity for running Algorithm \ref{alg:greedy} over them is $O(c\left(1+\frac{K}{M}\right)|\cF|^2)$.

\section{Conclusion}\label{sec:conclusion}
The main goal of this paper is to provide a systematic framework for presenting guarantees on achievable rates for networks employing LNC with multiple groupcast sessions. We use PBNA for designing codebooks based on finite number of transmissions for networks with acyclic interference graphs. For networks with cyclic interference graphs, we show this may not be possible, and instead, present a graph sparsification scheme with bounds on the resulting achievable source rates. Some of the future directions related to this problem include designing coding strategies that give higher throughput guarantees and generalization to arbitrary mincut values.

\bibliographystyle{ieeetr}
\bibliography{refa,refb,refc,refd}
\end{document}